

\NeedsTeXFormat{LaTeX2e}

\documentclass[runningheads,letterpaper]{llncs}



\usepackage{svninfo} 

\usepackage{amsmath,amssymb}

\newcommand{\NP}{\mathrm{NP}}
\newcommand{\APX}{\mathrm{APX}}
\newcommand{\Zset}{\mathbb{Z}}
\newcommand{\card}[1]{\left| #1 \right|}
\newcommand{\cT}{\mathcal{T}}
\newcommand{\cH}{\mathcal{H}}

\newcommand{\seg}[2]{\left[ #1, #2 \right]} 
\newcommand{\inda}[1]{\left(#1_a \right)_{a\in A}}

\newcommand{\pbmsu}{\textsc{MinSU}}
\newcommand{\pbsul}{\textsc{Soapy Union}}
\newcommand{\pbsu}{\textup{SU}}
\newcommand{\pbmaux}{\textsc{MinAux}}
\newcommand{\pbaux}{\textsc{Aux}}
\newcommand{\pbmvc}{\textsc{MinVC}}
\newcommand{\pbvc}{\textup{VC}}
\newcommand{\pbvcl}{\textsc{Vertex Cover}}

\newcommand\PNP{\ensuremath{\mathrm{P}=\NP}}

\newcommand{\defpb}[4]{
\begin{trivlist}
 \item[]\emph{Name}: #1
 \item[]\emph{Input}: #2
 \item[]\emph{Solution}: #3
 \item[]\emph{Measure}: #4
\end{trivlist}
}


\begin{document}

\svnInfo $Id: ms_complex.tex 23121 2012-04-20 19:27:11Z toto $

\title{Minimum soapy union} 

\titlerunning{Minimum soapy union}

\author{Francois Nicolas\inst{1} \and Sebastian
  B\"ocker\inst{1}}

\institute{Lehrstuhl f{\"u}r Bioinformatik,
  Friedrich-Schiller-Universit{\"a}t Jena, Ernst-Abbe-Platz 2, Jena, Germany,
  \email{\{francois.nicolas, sebastian.boecker\}@uni-jena.de}
}

\authorrunning{F.~Nicolas, S.~B\"ocker}

\maketitle

\sloppy


\begin{abstract} 
  Define \textsc{Minimum} \pbsul{} (\pbmsu) as the following optimization problem:
  given a $k$-tuple $(X_1, X_2, \dotsc, X_k)$ of finite integer sets, find a
  $k$-tuple $(t_1, t_2, \dotsc, t_k)$ of integers that minimizes the
  cardinality of ${(X_1 + t_1)} \cup {(X_2 + t_2)} \cup \dotsb \cup {(X_n +
    t_k)}$.  We show that \pbmsu{} is $\NP$-complete,
  $\APX$-hard, and polynomial for fixed~$k$.

  \pbmsu{} appears naturally in the context of
  protein shotgun sequencing: Here, the protein is cleaved into short and
  overlapping peptides, which are then analyzed by tandem mass spectrometry.
  To improve the quality of such spectra, one then asks for the mass of the
  unknown prefix (the~shift) of the spectrum, such that the resulting shifted
  spectra show a maximum agreement.  For real-world data the problem is even
  more complicated than our definition of \pbmsu; but our
  intractability results clearly indicate that it is unlikely to find a
  polynomial time algorithm for shotgun protein sequencing.
%
\end{abstract}


\section{Introduction}

The aim of this paper is to study the computational complexity of the
following optimization problem:

\defpb{\textsc{Minimum} \pbsul{} (\pbmsu)}
{a finite set $A$ and an indexed family $\inda{X}$ of non-empty finite sets of rational integers.}
{an indexed family $\inda{t}$ of rational integers.}
{the cardinality of $\bigcup_{a \in A} (X_a + t_a)$.}

Let us name \pbsul{} (\pbsu) the decision problem associated with \pbmsu.
The names have been chosen by analogy with the \textsc{Soapy Set Cover} problem \cite{nicolas08hardness}.
Clearly, \pbsu{} is a \emph{number problem} \cite{garey79computers}.
\pbmsu{} can be seen as a generalization of the \textsc{Subset Matching} problem \cite{clifford07approximate}:
optimally solving \textsc{Subset Matching} is equivalent to optimally solving the restriction of \pbmsu{} to those instances $\inda{X}$  such that the cardinality of $A$ equals~$2$.

\pbmsu{} naturally appears in the context of
protein shotgun sequencing~\cite{bandeira04shotgun,bandeira08automated,bandeira08de-novo}.  (This problem must not be confused with the more
widely known peptide shotgun sequencing.)  Sequencing the protein means that
we want to determine its amino acid sequence.  We assume that no genomic
information is available for the protein, so that its sequence cannot be
derived from the genomic information.  This is the case for many proteins
even in humans, monoclonal antibodies being an important
example~\cite{bandeira08automated}.  Experimentally, the protein is cleaved
into short and overlapping peptides, which are then analyzed by tandem mass
spectrometry.  To improve the quality of such spectra, one then asks for the
mass of the unknown prefix (the~shift) of the spectrum, such that the
resulting shifted spectra show a maximum agreement.  For real-world data the
problem is even more complicated than our definition of \pbmsu;
but our intractability results clearly indicate that it is unlikely to find a
polynomial time algorithm for shotgun protein sequencing.

\paragraph{Contribution.} 

In Section~\ref{sec:msu-easy}, 
we prove 
that  \pbsu{} belongs to $\NP$ and 
that \pbmsu{} can be solved in polynomial time for fixed~$A$.
In Section~\ref{sec:msu-hard}, we show that \pbsu{} is strongly $\NP$-hard; 
furthermore, 
we prove that there exists a real number $\rho > 1$ such that if \pbmsu{} is $\rho$-approximable in pseudo-polynomial time then $\PNP$. 

\paragraph{Notation and definitions.} 

For every finite set $S$, $\card{S}$ denotes the cardinality of~$S$.
For all sets $A$ and $S$, $S^A$ denotes the set of all families  of elements of $S$ indexed by~$A$.

The ring of rational integers is denoted~$\Zset$.
For every integer $n \ge 0$,
$\seg{1}{n}$ denotes the set of all $k \in \Zset$ such that $1 \le k \le n$.

A(n undirected) \emph{graph} is a pair $G = (V, E)$, where $V$ is a finite set and $E$ is a set of $2$-element subsets of~$V$: 
the elements of $V$ are the \emph{vertices} of $G$,
the elements of $E$ are the \emph{edges} of $G$,
and for each edge $e \in E$, the elements of $e$ are the \emph{extremities} of~$e$.

Let \textsc{Min} be a minimization problem.
The \emph{decision problem associated with} \textsc{Min} is: 
given an instance $I$ of \textsc{Min} and an integer $k \ge 0$,  
decide whether there exists a solution of \textsc{Min} on $I$ with measure at most~$k$.


\section{Membership} \label{sec:msu-easy}

For each instance $\inda{X}$ of \pbmsu, 
the set of all feasible solutions  of \pbmsu{} on $\inda{X}$ equals $\Zset^A$, which is infinite.
Therefore,  
\pbmsu{} is not an \emph{$\NP$-optimization problem} \cite{ausiello03complexity}, 
and thus the membership of \pbsu{} in $\NP$ is not completely trivial.

Let $G = (V, E)$ be a graph.
A \emph{disconnection} of $G$ is 
a pair $(B, C)$ such that 
$B \ne \emptyset$, 
$C \ne \emptyset$,  
$B \cap C = \emptyset$,
$V = B \cup C$,
and 
for every $(b, c) \in B \times C$, $\{ b, c \} \notin E$.
A graph is called \emph{disconnected} if it admits a disconnection.
A graph that is not disconnected is called \emph{connected}.

Let $\inda{Y}$ be an indexed family of sets.
The \emph{intersection graph} of $\inda{Y}$ is defined as follows: 
its vertex set equals $A$ and 
for all $b$, $c \in A$ with $b \ne c$, 
$\{ b, c \}$  is one of its edges if, and only if, $Y_b \cap Y_c \ne \emptyset$.

\begin{lemma} \label{lem:discon-opt}
Let $\inda{Y}$ be an instance of \pbmsu.
If the intersection graph of $\inda{Y}$ is disconnected 
then 
there exists $\inda{u} \in \Zset^A$ such that
\begin{equation} \label{eq:X+t=A}
\card{\bigcup_{a \in A}  (Y_a + u_a )}
< 
\card{\bigcup_{a \in A}  Y_a}   \, . 
\end{equation}
\end{lemma}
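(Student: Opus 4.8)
The plan is to construct the witnessing shift $\inda{u}$ explicitly from the disconnection, translating one block of the family so that it slides onto the other. Let $(B, C)$ be a disconnection of the intersection graph of $\inda{Y}$. The first step is to convert the graph-theoretic hypothesis into a set-theoretic one: since there is no edge between $B$ and $C$, the definition of the intersection graph gives $Y_b \cap Y_c = \emptyset$ for every $(b, c) \in B \times C$. Writing $P := \bigcup_{b \in B} Y_b$ and $Q := \bigcup_{c \in C} Y_c$, this means $P \cap Q = \emptyset$. Because $A = B \cup C$ with $B$, $C$ disjoint, we have $\bigcup_{a \in A} Y_a = P \cup Q$, and hence the right-hand side of \eqref{eq:X+t=A} equals $\card{P} + \card{Q}$.

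Next, the idea is to keep the $B$-indexed sets in place and translate every $C$-indexed set by a single common integer $s$ chosen so as to create an overlap. Concretely, I would define $\inda{u}$ by $u_a := 0$ for $a \in B$ and $u_a := s$ for $a \in C$, so that $\bigcup_{a \in A} (Y_a + u_a) = P \cup (Q + s)$. Since $B$ and $C$ are non-empty and each $Y_a$ is non-empty, both $P$ and $Q$ are non-empty; fixing any $x \in P$ and any $y \in Q$ and taking $s := x - y$ forces $x \in P \cap (Q + s)$, so that $P \cap (Q + s) \ne \emptyset$.

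Finally I would conclude by inclusion--exclusion, using that translation preserves cardinality and that the overlap just created contains at least one element:
\[
\card{P \cup (Q + s)} = \card{P} + \card{Q} - \card{P \cap (Q + s)} \le \card{P} + \card{Q} - 1 < \card{\textstyle\bigcup_{a \in A} Y_a} \, ,
\]
which is exactly \eqref{eq:X+t=A}. There is no genuine obstacle in this argument; the only point that requires care is the passage from ``no edge between $B$ and $C$'' to the set-disjointness $P \cap Q = \emptyset$, since it is precisely this disjointness that guarantees the common shift yields a strict net decrease of at least one in the cardinality of the union.
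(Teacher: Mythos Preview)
Your argument is correct and follows essentially the same route as the paper's own proof: split $A$ along a disconnection, observe that the two block-unions are disjoint so the right-hand side is $\card{P}+\card{Q}$, then choose shifts that force the two translated blocks to overlap and conclude by inclusion--exclusion. The only cosmetic difference is that the paper translates \emph{both} blocks (by $-r$ and $-s$ with $r\in P$, $s\in Q$) so that each contains $0$, whereas you keep the $B$-block fixed and translate only the $C$-block; the two choices differ by a global shift and yield the same inequality.
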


\begin{proof}
For each subset $B \subseteq A$, put $Y_{B} = \bigcup_{b \in B} Y_b$.
Let $(B, C)$ be a disconnection of the intersection graph of $\inda{Y}$.
Let 
$r \in Y_B$ 
and 
$s \in Y_C$
be fixed.
Set 
$u_b = - r$ for every~$b \in B$ 
and 
$u_c = - s$ for every~$c \in C$.
On the one hand, we have 
$Y_B \cap Y_C = \emptyset$
and 
$Y_A = Y_B \cup Y_C$,
so 
\begin{equation} \label{eq:A=BC}
\card{Y_A }  = \card{Y_B} + \card{Y_C} \, .
\end{equation}
On the other hand, we have 
$$
\bigcup_{a \in A}  (Y_a + u_a ) = (Y_B - r) \cup (Y_C - s) 
$$
and 
$$
(Y_B - r) \cap (Y_C - s) \ne \emptyset 
$$
because $0 \in (Y_B - r) \cap (Y_C - s)$; 
it follows 
\begin{equation} \label{eq:A+t=BC}
 \card{\bigcup_{a \in A}  (Y_a + u_a ) }
 < 
  \card{Y_B}
  +
  \card{Y_C} \, .
\end{equation}
It now suffices to combine Equations~\eqref{eq:A=BC} and \eqref{eq:A+t=BC} to obtain Equation~\eqref{eq:X+t=A}.
\qed
\end{proof}

Lemma~\ref{lem:discon-opt} can be restated as follows:

\begin{lemma} \label{lem:opt-con}
Let $\inda{X}$ be an instance of \pbmsu.
For any optimum solution $\inda{t}$ of \pbmsu{} on  $\inda{X}$,
the intersection graph of $\left( X_a + t_a \right)_{a \in A}$ is connected.
\end{lemma}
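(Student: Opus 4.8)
The plan is to obtain Lemma~\ref{lem:opt-con} as the contrapositive of Lemma~\ref{lem:discon-opt}, with a single translation identity doing all the real work. Given an instance $\inda{X}$ and an optimum solution $\inda{t}$, I would first set $Y_a = X_a + t_a$ for each $a \in A$. Since each $X_a$ is a non-empty finite set of integers and each $t_a \in \Zset$, every $Y_a$ is again a non-empty finite set of integers, so $\inda{Y}$ is itself a legitimate instance of \pbmsu. The point of this reduction is that Lemma~\ref{lem:discon-opt} then applies verbatim to $\inda{Y}$, and the intersection graph of $\inda{Y}$ is by definition exactly the intersection graph of $\left( X_a + t_a \right)_{a \in A}$ whose connectedness we wish to establish.

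Next I would argue by contradiction. Suppose the intersection graph of $\inda{Y}$ is disconnected. Lemma~\ref{lem:discon-opt} then provides a family $\inda{u} \in \Zset^A$ with $\card{\bigcup_{a \in A} (Y_a + u_a)} < \card{\bigcup_{a \in A} Y_a}$. The identity that closes the argument is the associativity of translation, namely $Y_a + u_a = (X_a + t_a) + u_a = X_a + (t_a + u_a)$ for every $a \in A$. Hence the family $\left( t_a + u_a \right)_{a \in A} \in \Zset^A$ is a feasible solution of \pbmsu{} on $\inda{X}$, and its measure $\card{\bigcup_{a \in A} (X_a + (t_a + u_a))}$ equals $\card{\bigcup_{a \in A} (Y_a + u_a)}$, which is strictly smaller than $\card{\bigcup_{a \in A} Y_a} = \card{\bigcup_{a \in A} (X_a + t_a)}$, the measure of $\inda{t}$. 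This contradicts the optimality of $\inda{t}$, forcing the intersection graph of $\left( X_a + t_a \right)_{a \in A}$ to be connected.

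There is essentially no conceptual obstacle here; the lemma is a repackaging of Lemma~\ref{lem:discon-opt}, and the work is purely in the bookkeeping. The two points I would check carefully are that $\inda{Y}$ genuinely satisfies the input conditions of \pbmsu{} (so that Lemma~\ref{lem:discon-opt} is applicable and not merely suggestive), and that the strict inequality transfers back to the original instance without loss — which it does precisely because translation by $t_a$ followed by translation by $u_a$ is translation by $t_a + u_a$, so the candidate improved solution on $\inda{X}$ realizes exactly the smaller cardinality guaranteed on $\inda{Y}$.
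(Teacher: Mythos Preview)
Your proposal is correct and follows essentially the same route as the paper: set $Y_a = X_a + t_a$, invoke Lemma~\ref{lem:discon-opt} on $\inda{Y}$ under the disconnection hypothesis, and observe that $\left(t_a + u_a\right)_{a \in A}$ is a strictly better solution on $\inda{X}$, contradicting optimality. The paper phrases this as a contrapositive rather than a contradiction and omits the bookkeeping you spell out (validity of $\inda{Y}$ as an instance, associativity of translation), but the argument is identical.
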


\begin{proof}
Let $\inda{t} \in \Zset^A$ be such that the intersection graph of $\left( X_a + t_a \right)_{a \in A}$  is disconnected.
Set $Y_a = X_a + t_a$ for each $a \in A$.
By Lemma~\ref{lem:discon-opt}, 
there exists $\left( u_a \right)_{a \in A} \in \Zset^A$ such that  Equation~\eqref{eq:X+t=A} holds.
It follows that $\left( t_a + u_a \right)_{a \in A}$ is a better solution of \pbmsu{} on $\inda{X}$ than $\inda{t}$.
\qed
\end{proof}

\begin{definition} \label{def:SGpi}
Let $G = (V, E)$ be a graph.
Put $$\tilde E = \left\{ (a, b) \in V \times V : \{ a, b \} \in E \right\}\, .$$
An \emph{antisymmetric edge-weight function} on $G$ is 
a function $\varpi$ from $\tilde E$ to $\Zset$ such that 
$\varpi(b, c) = - \varpi(c, b)$
for every $(b, c) \in \tilde E$.
For each antisymmetric edge-weight function $\varpi$ on $G$, 
define $S(G, \varpi)$ as the set of all $\left( t_a \right)_{a \in V} \in \Zset^V$ such that 
$t_b - t_c = \varpi(b, c)$ 
for all $(b, c) \in \tilde E$.
\end{definition}

Let us comment Definition~\ref{def:SGpi}.
The function $\varpi$ assigns both a magnitude and an orientation to each edge of~$G$:
for all $a$, $b \in V$ such that $\{ a, b \} \in E$, 
the magnitude of $\{ a, b \}$ is the absolute value of $\varpi(a, b)$
and 
the orientation of $\{ a, b \}$ 
is determined by the sign of $\varpi(a, b)$.
It is clear that for every $\left( t_a \right)_{a \in V} \in S(G, \varpi)$ and every $u \in \Zset$, 
$ \left( t_a + u \right)_{a \in V} \in S(G, \varpi)$.
If $G$ is connected then 
either $S(G, \varpi)$ is empty or there exists $\left( t_a \right)_{a \in V} \in \Zset^V$ such that 
$S(G, \varpi) = \left\{ \left( t_a + u \right)_{a \in V} : u \in \Zset \right\}$.
If $G$ is connected and $S(G, \varpi) \ne \emptyset$
then 
for any $(b, u) \in V \times \Zset$,
the unique element $\left( t_a \right)_{a \in V} \in S(G, \varpi)$ that satisfies $t_b = u$ is computable from $G$, $\varpi$, $b$, and $u$ in polynomial time.
A \emph{closed walk} in $G$ is a finite sequence 
$(a_0, a_1, a_2, \dotsc, a_k)$ 
such that 
$a_0 = a_k$
and 
$\{ a_{i - 1}, a_i \} \in E$ for every $i \in \seg{1}{k}$;
the \emph{weight} of $(a_0, a_1, a_2, \dotsc, a_k)$ under $\varpi$ is defined as 
$\varpi(a_0, a_1) +  \varpi(a_1, a_2) + \dotsb +   \varpi(a_{k - 1}, a_k)$.
A (simple) \emph{cycle} in $G$ is a closed walk  $(a_0, a_1, a_2, \dotsc, a_k)$  in $G$ such that 
for all $i$, $j \in \seg{1}{k}$, 
$a_i = a_j$ implies $i = j$. 
The following three conditions are equivalent:
\begin{enumerate}
 \item The set $S(G, \varpi)$ is non-empty. 
 \item The weight under $\varpi$ of every closed walk in $G$ equals~$0$.
\item The weight under $\varpi$ of every cycle in $G$ equals~$0$.
\end{enumerate}
The second and third conditions can be thought as abstract forms of Kirchhoff's voltage law.

A  \emph{tree} is a connected graph with one fewer edges than vertices, or equivalently, an acyclic connected graph.
An arbitrary graph $G = (V, E)$ is connected 
if, and only if, 
there exists a subset $E' \subseteq E$ such that $(V, E')$ is a tree 
($(V, E')$ is then called a \emph{spanning tree} of~$G$). 

\begin{lemma} \label{lem:tree-Xa}
Let $\inda{X}$ be an instance of \pbmsu. 
There exist a tree $H$ with vertex set $A$ and an antisymmetric edge-weight function $\varpi$ on $H$ that satisfy the following two conditions:
\begin{enumerate}
 \item \label{cond:range-varpi}
Every integer in the range of $\varpi$ can be written as the difference of two elements of $\bigcup_{a \in A} X_a$.
 \item \label{cond:SUpi-opt}
Every element of $S(H, \varpi)$ is an optimum solution of \pbmsu{} on  $\inda{X}$.
\end{enumerate}
\end{lemma}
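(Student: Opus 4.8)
The plan is to read off the tree and the weight function directly from an optimum solution. Concretely, let $\inda{t}$ be an optimum solution of \pbmsu{} on $\inda{X}$; such a solution exists because the measure is a positive integer and is therefore bounded below, so its infimum over $\Zset^A$ is attained. By Lemma~\ref{lem:opt-con}, the intersection graph $G$ of $\left( X_a + t_a \right)_{a \in A}$ is connected, and hence $G$ admits a spanning tree $H = (A, E')$. This $H$ will be the tree claimed by the lemma.

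Next I would define the weight function along the edges of this tree. For each edge $\{ b, c \} \in E'$ I set $\varpi(b, c) = t_b - t_c$ and $\varpi(c, b) = t_c - t_b$; this is antisymmetric by construction. To check Condition~\ref{cond:range-varpi}, note that $\{ b, c \}$ being an edge of the intersection graph means $(X_b + t_b) \cap (X_c + t_c) \ne \emptyset$, so there are $x \in X_b$ and $y \in X_c$ with $x + t_b = y + t_c$; hence $\varpi(b, c) = t_b - t_c = y - x$ is the difference of two elements of $\bigcup_{a \in A} X_a$, as required.

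For Condition~\ref{cond:SUpi-opt}, the key observation is that $\inda{t} \in S(H, \varpi)$ holds by the very definition of $\varpi$, so $S(H, \varpi)$ is non-empty; since $H$ is connected, the discussion following Definition~\ref{def:SGpi} gives $S(H, \varpi) = \left\{ \left( t_a + u \right)_{a \in A} : u \in \Zset \right\}$. It then remains to observe that adding a common integer $u$ to every shift merely translates the union, $\bigcup_{a \in A} \left( X_a + t_a + u \right) = \left( \bigcup_{a \in A} \left( X_a + t_a \right) \right) + u$, and translation preserves cardinality; hence every element of $S(H, \varpi)$ has exactly the same measure as the optimum solution $\inda{t}$, and is therefore itself optimum.

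I expect the only delicate point to be the appeal to the structure of $S(H, \varpi)$ for connected $H$ — everything else is a direct unwinding of the definitions. Since that structural fact (that a non-empty $S(H, \varpi)$ for connected $H$ is a single coset of the common-translation action) is already recorded in the comments after Definition~\ref{def:SGpi}, the argument should go through without real obstacle; the construction is essentially a bookkeeping of the equalities forced on $\inda{t}$ by the intersections along a spanning tree.
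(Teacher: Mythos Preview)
Your proof is correct and follows essentially the same approach as the paper: take an optimum solution $\inda{t}$, invoke Lemma~\ref{lem:opt-con} to get a spanning tree $H$ of the intersection graph of $\left( X_a + t_a \right)_{a \in A}$, and set $\varpi(b,c) = t_b - t_c$ on its edges. Your write-up is in fact slightly more detailed than the paper's, spelling out the existence of the optimum, the explicit witnesses $x,y$ for Condition~\ref{cond:range-varpi}, and the translation-invariance argument for Condition~\ref{cond:SUpi-opt}.
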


\begin{proof}
Let $\inda{t}$ be an optimum solution of \pbmsu{} on $\inda{X}$.
Let $H$ be a spanning tree of the intersection graph of $\left( X_a + t_a \right)_{a \in A}$: such a tree exists 
by Lemma~\ref{lem:opt-con}.
Let  $\varpi$ be the antisymmetric edge-weight function on $H$ defined by:
for all $b$, $c \in A$ such that $\{ b, c \}$ is an edge of $H$,  
$\varpi(b, c) = t_b - t_c$. 

For all $b$, $c \in A$, 
 such that $\{ b, c \}$ is an edge of the intersection graph of $\left( X_a + t_a \right)_{a \in A}$,
$(X_b + t_b) \cap (X_c + t_c)$ is non-empty, 
and thus 
$t_b - t_c$ belongs to $X_c - X_b$.
Therefore, the first condition holds.
Now, remark that  $S(H, \varpi) = \left\{ \left( t_a + u \right)_{a \in A} : u \in \Zset \right\}$,
so the second condition holds.
\qed
\end{proof}

\begin{theorem} \label{th:SU-inNP}
\pbsu{} belongs to $\NP$.
\end{theorem}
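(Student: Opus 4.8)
The plan is to exhibit $\pbsu$ as a guess-and-check problem. For a candidate instance $\bigl((X_a)_{a \in A}, k\bigr)$ the certificate I would use is simply a family $(t_a)_{a \in A} \in \Zset^A$; the verifier forms $\bigcup_{a \in A}(X_a + t_a)$, counts its elements, and accepts if and only if the count is at most $k$. Soundness is then immediate, since \emph{every} family in $\Zset^A$ is a feasible solution of $\pbmsu$ on $(X_a)_{a \in A}$ and the verifier evaluates its measure exactly, so an accepted certificate always witnesses a solution of measure at most $k$. The only genuine difficulty — and, as remarked after the notion of an $\NP$-optimization problem, the reason this membership is not trivial — is that the feasible set $\Zset^A$ is infinite, so a priori an optimum shift vector might require too many bits to be written down as a short certificate.

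This gap is exactly what Lemma~\ref{lem:tree-Xa} closes, and I would use it to produce a bounded optimum. First I would invoke the lemma to obtain a tree $H$ with vertex set $A$ together with an antisymmetric edge-weight function $\varpi$ whose values all occur as differences of two elements of $\bigcup_{a \in A} X_a$, and such that every member of $S(H, \varpi)$ is an optimum solution. Since $H$ is a tree it contains no cycle, so by the Kirchhoff-type equivalence recorded after Definition~\ref{def:SGpi} the set $S(H, \varpi)$ is non-empty. Next, fixing an arbitrary root $b_0 \in A$, I would single out the unique $(t_a)_{a \in A} \in S(H, \varpi)$ with $t_{b_0} = 0$: because $H$ is a tree there is a unique path from $b_0$ to each $a \in A$, and $t_a$ equals the sum of the at most $\card{A} - 1$ edge weights along that path. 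Writing $m = \min \bigcup_{a \in A} X_a$ and $M = \max \bigcup_{a \in A} X_a$, every weight has absolute value at most $M - m$, whence $|t_a| \le (\card{A} - 1)(M - m)$. As the bit length of $M - m$ is bounded by the size of the instance, each $t_a$ has bit length polynomial in the input, and so the whole family $(t_a)_{a \in A}$ is a certificate of polynomial size.

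It then remains only to record completeness and the polynomial running time of the verifier. For completeness, if the instance is a YES instance the optimum measure is at most $k$, and the bounded representative $(t_a)_{a \in A}$ built above is an optimum solution, hence a certificate the verifier accepts. For the running time, the verifier computes each shifted set $X_a + t_a$ by adding $t_a$ to every element of $X_a$ and then deduplicates the collection of all shifted values; there are $\sum_{a \in A} \card{X_a}$ such values, each of polynomial bit length, so the union and its cardinality are computable in polynomial time. I expect no real obstacle beyond this bookkeeping: all the difficulty has been absorbed into Lemma~\ref{lem:tree-Xa}, whose purpose is precisely to guarantee an optimum whose shifts are small. The one point needing care is the explicit size estimate, namely checking that summing the tree-path weights keeps $(t_a)_{a \in A}$ of polynomial, rather than merely finite, size.
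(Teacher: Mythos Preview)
Your proof is correct and follows essentially the same approach as the paper: both rely on Lemma~\ref{lem:tree-Xa} to show that a polynomial-size certificate exists. The only cosmetic difference is that the paper takes the pair $(H,\varpi)$ itself as the certificate and then computes a representative of $S(H,\varpi)$, whereas you take the solution $(t_a)_{a\in A}$ directly as the certificate and use the tree structure only to bound its bit length.
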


\begin{proof}
Let $\left(\inda{X}, k \right)$ be an arbitrary instance of \pbsu.
We propose the following (non-deterministic) algorithm to decide whether $\left(\inda{X}, k \right)$ is a yes-instance of \pbsu:
\begin{itemize}
 \item 
Guess a tree $H$ with vertex set $A$ and an antisymmetric edge-weight function $\varpi$ on $H$ such that the first condition of Lemma~\ref{lem:tree-Xa} holds.
\item 
Compute an element $\inda{t} \in S(H, \varpi)$.
\item 
Check whether the cardinality of $\bigcup_{a \in A} (X_a + t_a)$ is at most~$k$.
\end{itemize}
By Lemma~\ref{lem:tree-Xa}, the algorithm is correct.
Moreover, 
the bit-length of the guess (\emph{i.e}, the ordered pair $(H, \varpi)$) 
is polynomial in 
the bit-length of the input (\emph{i.e}, the instance $\inda{X}$), 
so the algorithm can be implemented in non-deterministic polynomial time.
\qed
\end{proof}

Let $m$ be a positive integer and let $X$ be a subset of $\Zset$ such that $X = -X$.
On each given $m$-edge graph,
there are exactly $\card{X}^m$ distinct antisymmetric edge-weight functions whose ranges are subsets of~$X$. 

Let $n$ be a positive integer and let $\cT_n$ denote the set of all trees with vertex set $\seg{1}{n}$.
\emph{Cayley's formula} ensures $\card{\cT_n} = n^{n - 2}$ \cite{harary73graphical}. 
Moreover, every tree can be reconstructed in polynomial time from its \emph{Pr\"ufer code} \cite{harary73graphical},
so $\cT_n$ is enumerable in $O \left( n^{O(n)} \right)$ time.

\begin{theorem}
There exists an algorithm  that, 
for each instance $\inda{X}$ of \pbmsu{} given as input,
returns an optimum solution of \pbmsu{} on $\inda{X}$ in 
$O \left( N^{O(\card{A})} \right)$ 
time, where $N$ denotes the bit-length of $\inda{X}$.
\end{theorem}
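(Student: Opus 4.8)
The plan is to convert Lemma~\ref{lem:tree-Xa} into a brute-force search over candidate pairs $(H, \varpi)$. First I would put $D = \bigcup_{a \in A} X_a$ and introduce the difference set $X = \left\{ x - y : x, y \in D \right\}$. This set is symmetric, $X = -X$, and has $\card{X} \le \card{D}^2$ elements; by Condition~\ref{cond:range-varpi} of Lemma~\ref{lem:tree-Xa} it suffices to restrict attention to antisymmetric edge-weight functions whose range is a subset of~$X$, which is exactly the finiteness we need to make an exhaustive search possible.

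The algorithm then proceeds as follows. Enumerate every tree $H$ with vertex set $A$ (reconstructing each from its Pr\"ufer code), and for each such $H$ enumerate every antisymmetric edge-weight function $\varpi$ on $H$ whose range is contained in~$X$. For each resulting pair $(H, \varpi)$, observe that $H$ is acyclic, so the third of the three equivalent conditions recalled just before Lemma~\ref{lem:tree-Xa} holds vacuously; hence $S(H, \varpi) \ne \emptyset$, and one can compute some $\inda{t} \in S(H, \varpi)$ in polynomial time and evaluate the measure $\card{\bigcup_{a \in A} (X_a + t_a)}$. The algorithm records, over all pairs examined, a solution $\inda{t}$ of minimum measure and returns it.

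Correctness is immediate from Lemma~\ref{lem:tree-Xa}: the pair $(H, \varpi)$ it guarantees satisfies Condition~\ref{cond:range-varpi}, hence lies among the enumerated pairs, and by Condition~\ref{cond:SUpi-opt} every element of $S(H, \varpi)$ is an optimum solution, so the minimum the algorithm returns equals the optimum. For the running time, write $n = \card{A}$. Cayley's formula bounds the number of trees by $n^{n - 2}$, and since a tree on $A$ has $n - 1$ edges, the count stated before the theorem (valid because $X = -X$) gives at most $\card{X}^{n - 1} \le \card{D}^{2(n - 1)}$ functions $\varpi$ per tree. Each of the resulting pairs is processed in time $\mathrm{poly}(N)$. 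Using $n \le N$ and $\card{D} \le N$, the total is $n^{n - 2} \cdot \card{D}^{2(n - 1)} \cdot \mathrm{poly}(N) \le N^{n} \cdot N^{2n} \cdot \mathrm{poly}(N) = N^{O(\card{A})}$, as claimed.

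The main difficulty here is genuinely the bookkeeping in the time bound rather than any conceptual hurdle: one must check that bounding the range of $\varpi$ by~$X$ via Lemma~\ref{lem:tree-Xa} keeps the enumeration finite, that the acyclicity of $H$ makes $S(H, \varpi)$ automatically non-empty so that no enumerated pair is wasted on an inconsistent system of differences, and that the superpolynomial-looking factor $n^{n - 2}$ collapses into $N^{O(\card{A})}$ precisely because $\card{A} \le N$.
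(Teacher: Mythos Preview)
Your proof is correct and follows essentially the same approach as the paper: enumerate all pairs $(H,\varpi)$ with $H$ a tree on $A$ and $\varpi$ ranging over the difference set of $\bigcup_{a\in A} X_a$, compute a representative of $S(H,\varpi)$ for each, and return the best, with correctness via Lemma~\ref{lem:tree-Xa} and the time bound via Cayley's formula and the count of antisymmetric edge-weight functions. Your write-up is in fact slightly more explicit than the paper's in noting that acyclicity of $H$ guarantees $S(H,\varpi)\ne\emptyset$, so no enumerated pair is discarded.
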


\begin{proof}
Put $U = \bigcup_{a \in A} X_a$.
Let $\cH$ denote the set of all ordered pairs of the form $(H, \varpi)$, 
where 
$H$ is a tree with vertex set $A$ 
and 
$\varpi$ is an antisymmetric edge-weight function on $H$ whose range is a subset of $U - U$.
We propose the following algorithm to solve \pbmsu{} on $\inda{X}$:
\begin{itemize}
 \item For each $(H, \varpi) \in \cH$, compute an element of $S(H, \varpi)$.
 \item Return a best solution of \pbmsu{} on $\inda{X}$ among those computed at the previous step.
\end{itemize}
By Lemma~\ref{lem:tree-Xa}, the algorithm returns an optimum solution of \pbmsu{} on $\inda{X}$.
Moreover, remark that $\card{\mathcal{H}} = \card{A}^{\card{A} - 2} \card{U - U}^{\card{A} - 1}$
and that $\cH$ is enumerable in $O \left( N^{O(\card{A})} \right)$ time.
Therefore, the algorithm can be implemented to run in $O \left( N^{O(\card{A})} \right)$ time.
\qed
\end{proof}


\section{Hardness} \label{sec:msu-hard}

The aim of this section is prove the hardness results for \pbmsu.

Let $G = (V, E)$ be a graph. 
A \emph{vertex cover} of $G$ is a subset $C \subseteq V$ such that $C \cap e \ne \emptyset$ for every $e \in E$: 
a vertex cover is a subset of vertices that contains at least one extremity of each edge.

\defpb
{\textsc{Minimum} \pbvcl{} (\pbmvc)}
{a graph~$G$.}
{a vertex cover $C$ of~$G$.}
{the cardinality of~$C$.}
The decision problem associated with \pbmvc{} is named \pbvcl{} (\pbvc).
It is well-known that \pbvc{} is $\NP$-complete \cite{garey79computers}.

To prove that \pbsu{} is (strongly) $\NP$-complete, 
we show that \pbvc{} Karp-reduces to (a suitable restriction of) \pbsu.
The following gadget plays a crucial role in our reduction as well as in other reductions 
that can be found in the literature \cite{nicolas08hardness,michael10complexity}:

\begin{definition}
For each integer $n \ge 1$, 
define $R_n = \left\{ (i - 1) n^2 + i^2 : i \in \seg{1}{n} \right\}$.
\end{definition}
              
A \emph{Golomb ruler} \cite{gardner83wheels,sidon32satz,babcock53intermodulation} is a finite subset $R \subseteq \Zset$ that satisfies the following three equivalent conditions:
\begin{itemize}
 \item For every $t \in \Zset$,  $t \ne 0$ implies  $\card{R \cap (R + t)} \le 1$.
 \item For every integer $d > 0$, there exists at most one $(r, s) \in R \times R$ such that $r - s = d$.
 \item For all $r_1$, $r_2$, $s_1$, $s_2 \in R$, $r_1 + r_2 = s_1 + s_2$ implies $\{ r_1, r_2 \} = \{ s_1, s_2 \}$.
\end{itemize}
Actually, only the first condition is referred to in what follows.
Among other convenient properties our gadget sets are Golomb rulers:

\begin{lemma} \label{lem:golomb}
Let $n$ be a positive integer.
The following four properties hold.
\begin{enumerate}
\item \label{ppty:subset} The least element of $R_n$ is $1$ and the greatest element of $R_n$ is~$n^3$.
\item  \label{ppty:card} The cardinality of $R_n$ equals~$n$.
\item \label{ppty:distn} 
The distance between any two elements of $R_n$ is at least $n^2 + 3$.
\item \label{ppty:Golomb} $R_n$ is a Golomb ruler.
\end{enumerate}
\end{lemma}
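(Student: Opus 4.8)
The plan is to introduce the shorthand $f(i) = (i-1)n^2 + i^2$, so that $R_n = \left\{ f(i) : i \in \seg{1}{n} \right\}$, and to base everything on the single algebraic identity
\[
f(i) - f(j) = (i - j)\,n^2 + (i^2 - j^2),
\]
together with the elementary observation that for $1 \le j < i \le n$ one has $0 < i^2 - j^2 < n^2$ (the lower bound because $i > j$, the upper bound because $i^2 \le n^2$ while $j^2 \ge 1$). Properties~\ref{ppty:subset}--\ref{ppty:distn} then fall out quickly. Evaluating $f$ at the endpoints gives $f(1) = 1$ and $f(n) = n^3$, and the consecutive-gap computation $f(i) - f(i-1) = n^2 + 2i - 1$ shows that $f$ is strictly increasing on $\seg{1}{n}$; this monotonicity yields that $1$ and $n^3$ are indeed the extreme elements (Property~\ref{ppty:subset}) and that the $n$ values $f(1), \dotsc, f(n)$ are pairwise distinct, so $\card{R_n} = n$ (Property~\ref{ppty:card}). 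For Property~\ref{ppty:distn}, the same gap formula gives $f(i) - f(i-1) = n^2 + 2i - 1 \ge n^2 + 3$ for every $i \in \seg{2}{n}$, and since the distance between any two elements of $R_n$ is a sum of such consecutive gaps, it is at least $n^2 + 3$.

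The real work is Property~\ref{ppty:Golomb}, and I would prove it in the equivalent form that the positive differences of $R_n$ are pairwise distinct, i.e.\ that the map $(i, j) \mapsto f(i) - f(j)$ is injective on the pairs with $i > j$. The key point is that the identity above exhibits $(i - j)\,n^2 + (i^2 - j^2)$ as a Euclidean division: since $0 \le i^2 - j^2 < n^2$, the quotient of $f(i) - f(j)$ by $n^2$ is exactly $i - j$ and the remainder is exactly $i^2 - j^2$. Consequently a given difference $d = f(i) - f(j)$ determines both $i - j = \lfloor d / n^2 \rfloor$ and $i^2 - j^2 = d \bmod n^2$; dividing the second quantity by the (nonzero) first recovers $i + j$, and from $i - j$ and $i + j$ one recovers $i$ and $j$ themselves. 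Hence if $f(i) - f(j) = f(k) - f(l)$ with $i > j$ and $k > l$, then $i - j = k - l$ and $i + j = k + l$, forcing $(i, j) = (k, l)$; this is precisely the Golomb-ruler condition.

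The only subtlety to watch is the bound $0 \le i^2 - j^2 < n^2$, which is exactly what makes the quotient/remainder reading of the identity legitimate, and which is also the place where the particular choice of the coefficient $n^2$ (rather than something smaller) gets used. Everything else is bookkeeping, so I expect no serious obstacle beyond verifying these inequalities and the endpoint evaluations.
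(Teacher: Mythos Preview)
Your proof is correct. For Properties~\ref{ppty:subset}--\ref{ppty:distn} it is essentially the paper's argument: the paper also computes the consecutive gap (writing it as $n^2 + 2i + 1$ for the gap between the $(i{+}1)$-th and $i$-th elements, which is your $n^2 + 2i - 1$ reindexed) and declares Properties~\ref{ppty:subset} and~\ref{ppty:card} clear. The only real difference is Property~\ref{ppty:Golomb}: the paper does not prove it at all but defers to the references \cite{nicolas08hardness,michael10complexity}, whereas you supply a self-contained argument via the observation that $f(i) - f(j) = (i - j)\,n^2 + (i^2 - j^2)$ is a bona fide Euclidean division (because $0 \le i^2 - j^2 < n^2$), so the difference determines both $i - j$ and $i + j$, hence $(i, j)$. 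Your route is more informative for a reader who does not have the cited papers at hand; the paper's route is shorter on the page. Either way the content is the same, and your inequality check $i^2 - j^2 \le n^2 - 1$ is exactly the hinge that makes the argument work.
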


\begin{proof}
Properties~\ref{ppty:subset} and~\ref{ppty:card} are clear.
Proofs of Property~\ref{ppty:Golomb} can be found in \cite{nicolas08hardness,michael10complexity}.
Finally, 
remark that for every $i \ge 1$, 
we have  
$$
\left( i n^2 + {(i + 1)}^2 \right)
-
 \left( (i - 1) n^2 + i^2 \right) 
  =
 n^2 + 2i + 1 
 \ge 
 n^2 + 3 \, .
$$
Hence, the distance the distance between any two consecutive elements of $R_n$ is at least 
$n^2 + 3$, and thus Property~\ref{ppty:distn} holds.
\qed
\end{proof}

\begin{theorem} \label{th:MSU-NPC}
\pbsu{} is strongly $\NP$-hard. 
\end{theorem}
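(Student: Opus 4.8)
The plan is to Karp-reduce \pbvc{} to \pbsu, keeping all integers polynomially bounded in the number of vertices so that the hardness is \emph{strong}. Fix a graph $G = (V, E)$ with $V = \seg{1}{n}$ and a bound $k$. I will use the ruler $R_n = \{r_1 < r_2 < \dots < r_n\}$ of Lemma~\ref{lem:golomb} to give each vertex~$i$ a private \emph{zone}, namely a translate $R_n + a_i$ of the ruler, where the offsets $a_i$ are spread far enough apart (a spacing exceeding the span $n^3$ of $R_n$ from Property~\ref{ppty:subset} suffices) that distinct zones are disjoint; since one may take $a_i = O(n^4)$, every integer produced stays polynomial in~$n$. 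Each edge $e = \{i, j\}$ is then turned into a set $X_e$ assembled from copies of $R_n$ so that (i) $X_e$ can be translated to sit inside the zone of $i$ or inside the zone of $j$, but essentially only there, and (ii) two edge-gadgets can be made to overlap in a whole ruler-copy only when both are routed through a common endpoint. Property~(ii) is exactly what the Golomb bound $\card{R_n \cap (R_n + t)} \le 1$ for $t \ne 0$ (Property~\ref{ppty:Golomb}) buys: any alignment other than an honest zone-coincidence makes two gadgets share at most one point, so it is never profitable.

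The intended correspondence is that routing each edge to one of its two endpoints is the same as picking, for every edge, a covering endpoint, and the set of endpoints actually used is precisely a vertex cover of~$G$. I would calibrate the construction together with a threshold $\theta = \theta(n, k)$ so that, up to the negligible one-point Golomb corrections, the measure $\card{\bigcup_{a \in A}(X_a + t_a)}$ decomposes as a fixed baseline plus a term proportional to the number of occupied zones. Minimizing the union then amounts to minimizing the number of occupied zones, which equals the size of a smallest vertex cover.

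For the forward direction I would start from a vertex cover $C$ of size at most $k$: orient each edge towards an endpoint in $C$, translate its gadget into that endpoint's zone, and check by the explicit union computation that the measure is at most $\theta$. This direction is routine once the gadget is fixed. The delicate direction is the converse: given shifts with $\card{\bigcup_a(X_a + t_a)} \le \theta$, I must recover a vertex cover of size at most $k$. Here I would first invoke Lemma~\ref{lem:opt-con} to assume the intersection graph of the shifted family is connected, and then use the distance and Golomb properties of $R_n$ (Properties~\ref{ppty:distn} and~\ref{ppty:Golomb}) to argue that substantial overlaps can only arise from gadgets genuinely sharing a zone, so that the set of occupied zones is at most $k$ vertices covering every edge.

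The main obstacle is precisely this reverse analysis: ruling out ``cheating'' configurations in which unrelated gadgets are nudged into many accidental one-point overlaps that could, a priori, undercut the threshold. The whole reason for choosing $R_n$ to be a Golomb ruler with large pairwise gaps is to make such accidental savings provably too small to matter, forcing every near-optimal solution into the clean zone-routing form and yielding the equivalence. Finally, since all integers are polynomial in~$n$, the reduction establishes strong \NP-hardness; combined with Theorem~\ref{th:SU-inNP} it shows that \pbsu{} is strongly \NP-complete.
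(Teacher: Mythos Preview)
Your high-level strategy is right: reduce from \pbvc, use a Golomb ruler to control accidental overlaps, keep all numbers polynomial in $n$. But the proposal has a genuine gap at the one place that matters, namely the construction of the edge gadget~$X_e$.

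You assert that $X_e$ can be ``assembled from copies of $R_n$'' so that a translate of $X_e$ fits inside the zone of $i$ or the zone of $j$ \emph{but essentially nowhere else}. No such assembly is described, and the obvious candidates fail. If $X_e$ is a single copy of $R_n$, then it slides into \emph{any} zone $R_n+a_\ell$, not just the two endpoint zones, and in fact all edge gadgets can be stacked on a single zone, making the measure independent of the graph. If $X_e$ consists of two copies at relative offset $a_j-a_i$, then the only ``good'' translate hits $Z_i\cup Z_j$ simultaneously, so the union over all properly placed edges is $\bigcup_{v\text{ non-isolated}}Z_v$, again independent of any cover. The Golomb property limits \emph{accidental} overlaps, but it does nothing to tie $X_e$ to the two vertices $i,j$; that linkage has to be built into the gadget, and you have not said how. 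The talk of ``negligible one-point corrections'' is also worrying for a Karp reduction, which needs an exact threshold.

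The paper's fix is to introduce a \emph{reference set}. The index set is $A=\{\emptyset\}\cup E$, and $X_\emptyset$ contains two translated copies of the ruler $R=R_{n+4}$ (at shifts $-s$ and $+n$, with $s=(n+4)^3$) together with two translated copies of $V$. Each edge gadget is $X_e=\{z_e-n\}\cup R\cup\{y_e+s\}$: one ruler copy plus two singleton markers encoding the endpoints. Aligning the ruler in $X_e$ with the copy at $-s$ (i.e.\ $t_e=-s$) makes the $z_e$-marker fall into one of the $V$-copies inside $X_\emptyset$ and leaves exactly $\{y_e\}$ outside; aligning with the copy at $+n$ leaves exactly $\{z_e\}$ outside. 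Thus the set $\bigcup_{e}(X_e+t_e)\setminus X_\emptyset$ is literally a set of chosen endpoints, and the threshold is the exact value $\card{X_\emptyset}+k$. The reverse direction does not go through Lemma~\ref{lem:opt-con} at all: one shows directly that for any $t\notin\{-s,+n\}$ the shifted ruler $R+t$ contributes at least $n>k$ points outside $X_\emptyset$ (using Properties~\ref{ppty:distn} and~\ref{ppty:Golomb} of Lemma~\ref{lem:golomb}), which immediately forces every $t_e$ into $\{-s,+n\}$ and turns the escaped markers into a vertex cover of size at most~$k$.
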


\begin{proof}
Put  $f(x) = \left( \tfrac{1}{4} x + 2 \right)^3 + \tfrac{1}{2}x - 4$.
Let \pbaux{} denote the restriction of \pbsu{} to those instances $\left( \inda{X}, k \right)$ 
such that 
the absolute value of every integer in 
$\left\{ k \right\} \cup \bigcup_{a \in  A} X_a$ is at most $f\left( \max_{a \in A} \card{X_a} \right)$.
We prove that  \pbaux{} is $\NP$-hard which implies the theorem.
More precisely, we show that \pbvc{}  Karp-reduces to \pbaux.

\paragraph{Presentation of the reduction.}

Let $I$ be an arbitrary instance of \pbvc.
The reduction maps $I$ to an instance $J$ of \pbsu{} that is defined as follows.
Let $G$, $V$, $E$, and $k$ be such that $I = (G, k)$ and $G = (V, E)$.
Let $n$ denote the cardinality of $V$.
Without loss of generality, 
we may assume 
$V = \seg{1}{n}$ 
and 
$k < n$ because $I$ is a yes-instance of \pbvc{} whenever $k \ge n$. 
Let $\left( y_e \right)_{e \in E}$, $\left( z_e \right)_{e \in E} \in V^E$ be such that $e = \left\{ y_e,  z_e \right\}$ for every $e \in E$.
Set 
 \begin{gather*}
 A = \{ \emptyset \} \cup E \, , \\
 s =  {(n + 4)}^3\,, \\
 R =  R_{n + 4}\,, \\
 X_\emptyset = (V - s - n) \cup  (R -  s) \cup (R + n) \cup (V + s + n)\,, \\
 X_e =  \left\{ z_e - n \right\} \cup R  \cup  \left\{ y_e + s  \right\} 
\end{gather*}
for each $e \in E$, and 
\begin{gather*}
J = \left( \inda{X}, \card{X_\emptyset} + k \right) \, . 
\end{gather*}
Clearly, $J$ is computable from  $I$ in polynomial time.

\paragraph{An instance of \pbaux.}
 Let us prove that $J$ is in fact an instance of \pbaux.
With the help of Lemmas~\ref{lem:golomb}.\ref{ppty:subset} and~\ref{lem:golomb}.\ref{ppty:card},
it is easy to see 
that $1 - s - n$ is the least element of $\bigcup_{a \in A} X_a$,
that $s + 2n$ is the greatest element of $\bigcup_{a \in A} X_a$, and 
that the cardinality of $X_\emptyset$ equals $4n + 8$. 
The latter property implies $\card{X_\emptyset} + k < 5 n + 8$.
Hence, 
the absolute value of every integer in 
$\left\{ \card{X_\emptyset} + k \right\} \cup \bigcup_{a \in A} X_a$ 
is at most $s + 2n$.
Now, remark that $s + 2n = f(\card{X_\emptyset}) \le f\left( \max_{a \in A} \card{X_a} \right)$. 

\paragraph{Correctness of the reduction.}
It remains to prove that 
$I$ is a yes-instance of \pbvc{} 
if, and only if,
$J$ is a yes-instance of \pbsu.

\begin{lemma} \label{lem:Xens}
For every $e \in E$,
it holds true that 
\begin{enumerate}
 \item \label{ppty:s} $(X_e -  s) \setminus  X_\emptyset = \left\{  y_e  \right\}$ and that
 \item \label{ppty:n} $(X_e +  n) \setminus  X_\emptyset =  \left\{  z_e  \right\}$.
\end{enumerate}
\end{lemma}

\begin{proof}
 We only prove Property~\ref{ppty:s} because Property~\ref{ppty:n} can be proven in the same way. 
Put $Y  =  \left\{ z_e - s - n \right\} \cup (R - s)$.
It is clear that 
$X_e - s  =  Y  \cup  \left\{ y_e  \right\}$  
and 
$Y  \subseteq X_\emptyset$. 
Therefore, we have 
\begin{equation} \label{eq:Xe-Y-y}
 X_e - s = \left\{ y_e  \right\} \setminus X_\emptyset \, .
\end{equation}
Moreover,  it follows from Lemma~\ref{lem:golomb}.\ref{ppty:subset} that 
\begin{itemize}
 \item the greatest element of ${(V - s - n)} \cup {(R - s)}$ equals $0$ and that
 \item the least element of ${(R + n)} \cup {(V + s + n)}$ equals $n + 1$.
\end{itemize}
Therefore, $X_\emptyset$ does not contain any element of $\seg{1}{n}$.
In particular, $y_e$ does not belong to $X_\emptyset$.
Combining the latter fact with Equation~\eqref{eq:Xe-Y-y}, 
we obtain  Property~\ref{ppty:s}.
\qed
\end{proof}

\begin{lemma} \label{lem:tpms}
For every $t \in \Zset$, 
$\card{(R + t) \setminus X_\emptyset} <  n$ implies $t \in \left\{ - s, + n \right\}$. 
\end{lemma}

\begin{proof}
Let us first bound from above the cardinality of $(R + t) \cap X_\emptyset$.
For each $\tau \in \Zset$, put
$
P_\tau = (R  + t) \cap (V + \tau)
$
and  
$
Q_\tau = (R  + t) \cap (R + \tau).
$
First, 
it follows from Lemma~\ref{lem:golomb}.\ref{ppty:distn} that 
$\card{P_\tau} \le 1$.
Second, 
$\tau \ne t$ implies $\card{Q_\tau} \le 1$ by Lemma~\ref{lem:golomb}.\ref{ppty:Golomb}.
And third, 
it holds that
$$(R + t) \cap X_\emptyset =  P_{- s - n} \cup Q_{-s} \cup Q_n \cup P_{s + n} \, .$$
Now, assume $t \notin \{ - s, + n \}$.
From the preceding three facts, 
we deduce that
$$
\card{(R + t) \cap X_\emptyset} 
\le 
\card{P_{-s - n}} + 
\card{Q_{-s}} + 
\card{Q_n} + 
\card{P_{s + n}} 
\le 
4 \,. 
$$
(In fact, 
it is not hard to see that $\card{(R + t) \cap X_\emptyset} \le 2$ holds: 
$t >  +n$ implies  $P_{- s - n} = Q_{-s} = \emptyset$,  
$- s <  t < +n$ implies $P_{- s - n}  = P_{s + n} = \emptyset$, and 
$t < - s$ implies $Q_n = P_{s + n} = \emptyset$.)
Since $\card{R + t} = n + 4$ by Lemma~\ref{lem:golomb}.\ref{ppty:card}, 
we finally get 
$$
\card{(R + t) \setminus X_\emptyset} = n + 4 - \card{(R + t) \cap X_\emptyset }  \ge  n  \, .
$$ 
\qed 
\end{proof}

\paragraph{(If).} 

Assume that $I$ is a yes-instance of \pbvc.
Then, there exists a vertex cover $C$ of $G$ with $\card{C} \le k$.
Put 
$F = \left\{ e \in E : y_e \in C \right\}$.
Set
$t_\emptyset = 0$, 
$t_e = -s$ for each $e \in F$, and 
$t_e = +n$ for each $e \in E \setminus F$.
On the one hand, 
it holds that 
\begin{equation} \label{eq:Xa-ta-X0}
\card{\bigcup_{a \in A} (X_a + t_a) } = \card{X_\emptyset} + \card{\bigcup_{e \in E} (X_e + t_e ) \setminus X_\emptyset}  
\end{equation}
because $t_\emptyset = 0$.
On the other hand,
it follows from Lemma~\ref{lem:Xens} that 
\begin{equation} \label{eq:Xe-te-ye-ze}
\bigcup_{e \in E} (X_e + t_e) \setminus X_\emptyset 
= 
\left\{ y_e : e \in F \right\} \cup \left\{ z_e : e \in E \setminus F \right\} \, .
\end{equation}
Since the right-hand side of Equation~\eqref{eq:Xe-te-ye-ze} is a subset of~$C$, 
we have
\begin{equation} \label{eq:Xe-te-X0-k}
\card{\bigcup_{e \in E} (X_e + t_e) \setminus X_\emptyset  } \le  k \, .
\end{equation}
We then get   
\begin{equation} \label{eq:J-yes}
 \card{\bigcup_{a \in A} (X_a + t_a) } \le \card{X_\emptyset} + k 
\end{equation}
by combining Equations~\eqref{eq:Xa-ta-X0} and~\eqref{eq:Xe-te-X0-k}.
Hence, $J$ is a yes-instance of \pbsu.

\paragraph{(Only if).} 
Assume that $J$ is a yes-instance of \pbsu.
Then, there exists $\inda{t} \in \Zset^A$ such that Equation~\eqref{eq:J-yes} holds.
Replacing 
$\inda{t}$ with $\left( t_a - t_\emptyset \right)_{a \in A}$ 
leaves the cardinality of $\bigcup_{a \in A} (X_a + t_a)$ unchanged; 
therefore, 
we may assume that $t_\emptyset = 0$;
in particular, Equation~\eqref{eq:Xa-ta-X0} holds.

Put 
$$
C = \bigcup_{e \in E} (X_e + t_e) \setminus X_\emptyset  \, .
$$
Combining Equations~\eqref{eq:Xa-ta-X0} and~\eqref{eq:J-yes}, 
we obtain Equation~\eqref{eq:Xe-te-X0-k}, or equivalently, $\card{C} \le k$.
Now, let us prove that $C$ is a vertex cover of~$G$.
Consider an arbitrary edge~$e \in E$.
Since we have
$$
(R + t_e) \setminus X_\emptyset 
 \subseteq 
(X_e + t_e) \setminus X_\emptyset \subseteq C \, ,
$$ 
it follows from Lemma~\ref{lem:tpms} that $t_e \in \{  -s,  +n \}$.
Consequently, Lemma~\ref{lem:Xens} ensures that some extremity of $e$ belongs to $(X_e + t_e) \setminus X_\emptyset$,
and this extremity is \emph{a fortiori} in~$C$.
Hence, $I$ is a yes-instance of \pbvc.
\qed
\end{proof}

A graph $G = (V, E)$ is called \emph{cubic} if for every vertex $v \in V$, 
the degree of $v$ in $G$ 
(\emph{i.e.}, the cardinality of $\left\{  w \in V : \{ v, w \} \in E \right\}$) 
equals~$3$. 
Let \pbmvc$3$ denote the restriction of \pbmvc{} to cubic graphs.
\pbmvc$3$  is $\APX$-complete under L-reduction \cite{alimonti00APX};
moreover, if \pbmvc$3$ is $\frac{100}{99}$-approximable in polynomial time then $\PNP$ \cite{chlebik06complexity}.

To prove that \pbmsu{} is ``strongly'' $\APX$-hard,
which is a better result than Theorem~\ref{th:MSU-NPC},
we show that \pbmvc$3$ L-reduces to a suitable restriction of \pbmsu.
In fact, we simply adapt the proof of Theorem~\ref{th:MSU-NPC}.

\begin{theorem} \label{th:MSU-APX}
There exists a real constant $\rho > 1$ such that 
if \pbmsu{} is $\rho$-approximable in pseudo-polynomial time then $\PNP$.
\end{theorem}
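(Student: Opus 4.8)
The plan is to exhibit an L-reduction from \pbmvc$3$ to \pbmaux, the restriction of \pbmsu{} to instances $\inda{X}$ all of whose integers have absolute value at most $f\left(\max_{a\in A}\card{X_a}\right)$. Because $f$ is a fixed polynomial, every \pbmaux{} instance carries only polynomially bounded integers, so a pseudo-polynomial time algorithm for \pbmsu{} runs in genuine polynomial time on \pbmaux. For the forward map I reuse, verbatim, the map $I=(G,k)\mapsto J$ from the proof of Theorem~\ref{th:MSU-NPC}, stripped of its threshold: $f(G)=\inda{X}$, which the ``An instance of \pbaux'' paragraph of that proof already certifies to be a \pbmaux{} instance. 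The correctness part of Theorem~\ref{th:MSU-NPC} shows that $(G,k)$ is a yes-instance of \pbvc{} if, and only if, $\left(\inda{X},\card{X_\emptyset}+k\right)$ is a yes-instance of \pbsu; taking $k\in\{\mathrm{opt}_{\pbmvc}(G),\,\mathrm{opt}_{\pbmvc}(G)-1\}$ sharpens this into the exact identity
\begin{equation*}
\mathrm{opt}_{\pbmsu}\bigl(f(G)\bigr)=\card{X_\emptyset}+\mathrm{opt}_{\pbmvc}(G)\,.
\end{equation*}

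For the first L-reduction condition I must bound $\mathrm{opt}_{\pbmsu}(f(G))$ by a constant multiple of $\mathrm{opt}_{\pbmvc}(G)$, and here the cubic hypothesis is indispensable. One has $\card{X_\emptyset}=4n+8$, while a cubic graph on $n$ vertices has $\card{E}=\tfrac{3}{2}n$ edges; since every vertex covers at most three edges, each vertex cover has at least $\card{E}/3=n/2$ vertices, so $\mathrm{opt}_{\pbmvc}(G)\ge n/2\ge 1$. Hence $\card{X_\emptyset}=4n+8\le 16\,\mathrm{opt}_{\pbmvc}(G)$ and
\begin{equation*}
\mathrm{opt}_{\pbmsu}\bigl(f(G)\bigr)=\card{X_\emptyset}+\mathrm{opt}_{\pbmvc}(G)\le 17\,\mathrm{opt}_{\pbmvc}(G)\,,
\end{equation*}
so $\alpha=17$ works. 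This is precisely why I reduce from \pbmvc$3$ and not from \pbvc: for a star the optimum cover has size $1$ while $\card{X_\emptyset}$ still grows linearly in $n$, and no constant $\alpha$ could absorb that gap.

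For the second condition I build a polynomial-time back map $g$. Given $\inda{t}\in\Zset^A$, I first translate so that $t_\emptyset=0$ (this does not change the union cardinality) and set $C=\bigcup_{e\in E}(X_e+t_e)\setminus X_\emptyset$, so that the \pbmsu{} measure is $m'(\inda{t})=\card{\bigcup_{a\in A}(X_a+t_a)}=\card{X_\emptyset}+\card{C}$ and therefore $m'(\inda{t})-\mathrm{opt}_{\pbmsu}(f(G))=\card{C}-\mathrm{opt}_{\pbmvc}(G)$. I define $g(\inda{t})=C'$ by putting, for each edge $e$, the vertex $y_e$ into $C'$ when $t_e=-s$, the vertex $z_e$ when $t_e=+n$, and both extremities of $e$ otherwise; thus $C'$ is a vertex cover of $G$. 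The crux is the inequality $\card{C'}\le\card{C}$, which yields the second condition with $\beta=1$. If every $t_e$ lies in $\{-s,+n\}$, Lemma~\ref{lem:Xens} gives $C=C'\subseteq V$, so $\card{C'}=\card{C}$. Otherwise some edge has $t_e\notin\{-s,+n\}$, and the contrapositive of Lemma~\ref{lem:tpms} yields $\card{(R+t_e)\setminus X_\emptyset}\ge n$; as $(R+t_e)\setminus X_\emptyset\subseteq C$, this forces $\card{C}\ge n\ge\card{C'}$, the last inequality holding because $C'\subseteq V$ and $\card{V}=n$.

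I expect this second condition to be the main obstacle. The proof of Theorem~\ref{th:MSU-NPC} only handled near-optimal solutions, which after normalisation were already canonical (every shift in $\{-s,+n\}$); an L-reduction must instead convert \emph{every} feasible \pbmsu{} solution into a vertex cover, including wasteful ones with $t_e\notin\{-s,+n\}$. The saving observation is that a single wasteful edge already injects at least $n$ elements into the union (Lemma~\ref{lem:tpms}), which alone dominates the entire cover $C'\subseteq V$; coupling this with the cubic lower bound $\mathrm{opt}_{\pbmvc}(G)\ge n/2$ is what keeps both reduction constants truly constant. With $(f,g,\alpha,\beta)$ in hand, the standard L-reduction calculus converts any $\rho$-approximation of \pbmaux{} into a $\bigl(1+\alpha\beta(\rho-1)\bigr)$-approximation of \pbmvc$3$. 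Choosing $\rho>1$ so small that $1+\alpha\beta(\rho-1)\le\tfrac{100}{99}$, and recalling that on \pbmaux{} pseudo-polynomial time collapses to polynomial time, a pseudo-polynomial $\rho$-approximation of \pbmsu{} would $\tfrac{100}{99}$-approximate \pbmvc$3$ in polynomial time, forcing $\PNP$.
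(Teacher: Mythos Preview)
Your proof is correct and follows essentially the same approach as the paper: an L-reduction from \pbmvc$3$ to \pbmaux{} built on the gadget of Theorem~\ref{th:MSU-NPC}, with Lemma~\ref{lem:tpms} controlling the non-canonical shifts. The only cosmetic differences are that the paper case-splits the back map globally (return $V$ when $k\ge n$, else return $\bigcup_{e\in E}(X_e+t_e-t_\emptyset)\setminus X_\emptyset$) rather than per edge, and obtains the slightly tighter constant $\alpha=14$ by assuming $n\ge 24$, leading to $\rho=\tfrac{1387}{1386}$ instead of your $\rho$ from $\alpha=17$.
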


\begin{proof}
Let $f$ be as in the proof of Theorem~\ref{th:MSU-NPC} 
and 
let \pbmaux{} denote the restriction of \pbmsu{} to those instances $\inda{X}$ such that 
the absolute value of every integer in $\bigcup_{a \in  A} X_a$ is at most $f\left( \max_{a \in A} \card{X_a} \right)$.
We prove that \pbmaux{} is $\APX$-hard, 
which implies the theorem 
because 
every pseudo-polynomial-time approximation algorithm for \pbmsu{} 
is 
 a polynomial-time approximation algorithm for \pbmaux.
More precisely, we show that 
\pbmvc$3$ {L-reduces} \cite{ausiello03complexity} to \pbmaux.
We use the notation of the proof of Theorem~\ref{th:MSU-NPC}.

\paragraph{From a graph to an instance of \pbaux.}

Let $\tau$ denote the minimum cardinality of a vertex cover of~$G$.
Let $\upsilon$ denote the minimum cardinality of $\bigcup_{a \in A} (X_a + t_a)$ over all $\inda{t} \in \Zset^A$.
Clearly, 
$\inda{X}$ is computable from $G$ in polynomial time ($\inda{X}$ is independent of $k$),
$\inda{X}$ is an instance of \pbmaux, and 
$\upsilon = \card{X_\emptyset} + \tau = 4n + 8 + \tau$.

Now, assume that $G$ is cubic and $n \ge 24$.
The first assumption implies $3 \tau \ge \card{E} \ge n$.
It follows 
$$
4n + 8 
= \left( 4 + \frac{8}{n}  \right) n 
\le \left( 12  + \frac{24}{n}  \right) \tau
\le 13\tau \, ,
$$
and thus $\upsilon \le 14 \tau$.

\paragraph{From a solution of \pbaux{} to a vertex cover.}

Let $\inda{t} \in \Zset^A$. 
Put 
$$k = \card{\bigcup_{a \in A} (X_a + t_a)} - \card{X_\emptyset}\,.$$
There exists a  vertex cover $C$ of $G$ 
that satisfies 
$\card{C} \le  k$, 
 or equivalently, 
$$\card{C} - \tau \le  \card{\bigcup_{a \in A} (X_a + t_a)} - \upsilon \, . $$
Moreover, such a vertex cover is computable from  $G$ and $\inda{t}$ in polynomial time:
\begin{itemize}
 \item if $k  \ge n$ then set $C = V$ and 
 \item if $k  < n$ then set $C = \bigcup_{e \in E} (X_e + t_e - t_\emptyset) \setminus X_\emptyset$.
\end{itemize}

\paragraph{Conclusion.}

Let $\varepsilon$ be a positive real number.
If \pbmsu{} is ${(1 + \varepsilon)}$-approximable in pseudo-polynomial time 
then \pbmvc$3$ is ${(1 + 14\varepsilon)}$-approximable in polynomial time.
Therefore, if \pbmsu{} is $\frac{1387}{1386}$-approximable in pseudo-polynomial then $\PNP$.
\qed
\end{proof}

An immediate corollary of Theorem~\ref{th:MSU-APX} is that \pbmsu{} does not
admit any (pseudo-)polynomial time approximation scheme.


\section{Open questions}

The following three questions remain open:
Does there exist a constant $\rho > 1$ such that 
\pbmsu{} is $\rho$-approximable in (pseudo-)polynomial time?
Is \pbsu{} \emph{fixed-parameter tractable} \cite{flum06parameterized} with respect to parameter $\card{A}$?
Is \pbsu{} solvable in polynomial time for bounded $\max_{a \in A} \card{X_a}$?

%
%



\bibliographystyle{abbrv}

\bibliography{bibtex/group-literature}

\end{document}